\newtheorem{thm}{Theorem}
\newtheorem{lemma}[thm]{Lemma}
\newtheorem{definition}[thm]{Definition}
\newtheorem{theorem}[thm]{Theorem}
\newtheorem{proposition}[thm]{Proposition}
\newtheorem{corollary}[thm]{Corollary}
\DeclareMathOperator{\support}{support}
\DeclareMathOperator{\Trim}{Trim}
\DeclareMathOperator{\KlmApprox}{KolmogorovApprox}
\DeclareMathOperator{\OptTrim}{OptTrim}
\title{An optimal approximation of discrete random variables with respect to the Kolmogorov distance}
\author{
  Liat Cohen, Gera Weiss \\
  Department of Computer Science\\
  Ben Gurion University of The Negev\\
  \texttt{liati@post.bgu.ac.il, geraw@bgu.ac.il} 
  \And
  Dror Fried \\
  Department of Computer Science \\
  Rice University \\
  \texttt{dror.fried@rice.edu} 
}
\begin{document}

\maketitle

\begin{abstract}
We present an algorithm that takes a discrete random variable $X$ and a number $m$ and computes a random variable whose support (set of possible outcomes) is of size at most $m$ and whose Kolmogorov distance from $X$ is minimal. In addition to a formal theoretical analysis of the correctness and of the computational complexity of the algorithm, we present a detailed empirical evaluation that shows how the proposed approach performs in practice in different applications and domains.
\end{abstract}

\section{Introduction}

Many different approaches to approximation of probability distributions are studied in the literature~\cite{AMCR83,pavlikov2016cvar,PS77,vidyasagar2012metric,cohen2015estimating,CohenGW18}. 
These approaches vary in the types random variables considered, how they are represented, and in the criteria used for evaluation of the quality of the approximations. This paper is on approximating discrete distributions represented as explicit probability mass functions with ones that are simpler to store and to manipulate. This is needed, for example, when a discrete distribution is given as a large data-set, obtained, e.g., by sampling, and we want to represent it approximately with a small table (see~\cite{huq2016efficient,cohen2015estimating} for examples).  

The main contribution of this paper is an efficient algorithm for computing the best possible approximation of a given random variable with a random variable whose complexity is not above a prescribed threshold, where the measures of the quality of the approximation and of its complexity are as specified in the following two paragraphs. 


We measure the quality of an approximation scheme by the distance between random variables and their approximations. Specifically, we use the Kolmogorov distance which is  commonly used for comparing random variables in statistical practice and literature. Given two random variables $X$ and $X'$ whose cumulative distribution functions (cdf) are $F_X$ and $F_{X'}$, respectively, the Kolmogorov distance between $X$ and $X'$ is $d_K(X,X')= \sup_t |F_X(t) - F_{X'}(t)|$ (see, e.g.,~\cite{gibbons2011nonparametric}). We say that $X'$ is a good approximation of $X$ if $d_K(X,X')$ is small. This distance is the basis for the often used Kolmogorov-Smirnoff test for comparing a sample to a distribution or two samples to each other. It represents closeness in terms of deadlines~\cite{cohen2015estimating}.

The complexity of a random variable is measured here by the size of its support, the set of possible outcomes, $m=|\support(X)|=|\{x\colon Pr(X=x) \neq 0\}|$. When distributions are maintained as explicit tables, as done in many implementations of statistical software, the size of the support of the variable is proportional to the amount of memory needed to store it and to the complexity of the computations around it. 

Together, the exact notion of optimality of the approximation targeted in this paper is:
\begin{definition}
	A random variable $X'$ is an optimal $m$-approximation of a random variable $X$ if $|\support(X')| \leq m$ and there is no random variable $X''$ such that $|\support(X'')| \leq m$ and $d_K(X,X'') < d_K(X,X')$.
\end{definition}

In these terms, the main contribution of the paper is an efficient (polynomial time and memory) algorithm that takes $X$ and $m$ as parameters and constructs an optimal $m$-approximation of $X$.

The rest of the paper is organized as follows. In Section~\ref{sec:rel-work} we describe how our work relates to other algorithms and problems studied in the literature. In Section~\ref{sec:alg} we detail the proposed algorithm, analyze its properties, and prove the main theorem. In Section~\ref{sec:exp} we demonstrate how the proposed approach performs on the problem of estimating the probability of hitting deadlines is plans and on randomly generated random variables and compare it to alternatives approximation approaches from the literature. The paper is concluded with a discussion and with ideas for future work in Section~\ref{sec:discussion}.

\section{Related work}
\label{sec:rel-work}
The most relevant work related to this paper is the papers by Cohen at. al.~\cite{cohen2015estimating,CohenGW18}. These papers study approximations of random variables in the context of estimating deadlines. In this context, $X'$ is defined to be a good approximation of $X$ if $F_{X'}(t) > F_{X}(t)$ for any $t$ and $\sup_t F_{X'}(t) - F_{X}(t)$ is small. Note that this measure is not a proper distance measure because it is not symmetric. The motivation given by Cohen at. al. for using this type of approximation is for cases where overestimation of the probability of missing a deadline is acceptable but underestimation is not. In Section~\ref{sec:exp}, we consider the same case-studies examined by Cohen at. al. and show how the algorithm proposed in this paper performs relative to the algorithms proposed there when both over- and under- estimations are allowed. As expected, the Kolmogorov distance between the approximated and the original random variable is considerably smaller when using the algorithm proposed in this paper. 

Another relevant prior work is the theory of Sparse Approximation (aka Sparse Representation) that deals with sparse solutions for systems of linear equations, as follows. 
Given a matrix $D \in \mathbb{R}^{n \times p}$ and a vector $x \in \mathbb{R}^n$, the most studied sparse representation problem is finding the sparsest possible representation $\alpha \in \mathbb{R}^p$ satisfying $x = D\alpha$:
\[
\min_{\alpha \in \mathbb{R}^p} \|\alpha\|_0 \text{ subject to } x = D\alpha
\]
where $\|\alpha\|_0 = |\{ i \in [p]: \alpha_i \neq 0 \}|$ is the $\ell_0$ pseudo-norm, counting the number of non-zero coordinates of $\alpha$. This problem is known to be NP-hard with a reduction to NP-complete subset selection problems.
In these terms, using also the $\ell_\infty$ norm that represents the maximal coordinate and the $\ell_1$ norm that represents the sum of the coordinates, our problem can be phrased as:
\[
\min_{\alpha \in [0,\infty)^p}\|x - D\alpha\|_{\infty} \text{ subject to }  \|\alpha\|_0 = m \text{ and } \|\alpha\|_1=1
\]
where $D$ is the lower unitriangular matrix, $x$ is related to $X$ such that the $i$th coordinate of $x$ is $F_X(x_i)$ where $\support(X)=\{x_1 < \cdots < x_n\}$ and $\alpha$ is related to $X'$ such that the $i$th coordinate of $\alpha$ is $f_{X'}(x_i)$. The functions $F_X$ and $f_{X'}$ represent, respectively, the cumulative distribution function of $X$ and the mass distribution function of $X'$, i.e.,  the coordinates of $x$ are positive and monotonically increasing and its last coordinate is one. We show that this specific sparse representation problem can be solved in $O(n^2m)$ time and $O(m^2)$ memory.

The presented work is also related to the research on binning in statistical inference. Consider, for example, the problem of credit scoring~\cite{zeng2017comparison} that deals with separating good applicants from bad applicants where the Kolmogorov–Smirnov statistic KS is a standard measure. The KS comparison is often preceded by a procedure called binning where small values in the the probability mass function are moved to nearby values. There are many methods for binning~\cite{mays2001handbook,refaat2011credit,bolton2010logistic,siddiqi2012credit}.
In this context, our algorithm can be considered as a binning strategy that provides optimality guarantees with respect to the Kolmogorov distance.

Our study is also related to the work of Pavlikov and Uryasev~\cite{pavlikov2016cvar}, where a procedure for producing a random variable $X'$ that optimally approximates a random variable $X$ is presented. Their approximation scheme, achieved using linear programming, is designed for a different notion of distance called CVaR. The contribution of the present work in this context is that our method is direct, not using linear programming, thus allowing tighter analysis of time and memory complexities. Also, our method is designed for minimizing the Kolmogorov distance that is more prevalent in applications. For comparison, in Section~\ref{sec:exp} we briefly discuss the performance of linear programming approach similar to the one proposed in~\cite{pavlikov2016cvar} for the Kolmogorov distance and compare it our algorithm. 

A problem very similar to ours is termed ``order reduction'' by Vidyasagar in~\cite{vidyasagar2012metric}. There, the author defines an information-theoretic based distance between discrete random variables and studies the problem of finding a variable whose support is of size $m$ and its distance from $X$ is as small as possible (where $X$ and $m$ are given). The only difference between this and the problem studied in this paper, is that Vidyasagar examines a different notion of distance. Vidyasagar proves that computing the distance (that he considers) between two probability distributions, and computing the optimal reduced order approximation, are both NP-hard problems, because they can both be reduced to nonstandard bin-packing problems. He then develops efficient greedy approximation algorithms. In contrast, our study shows that there are efficient solutions to these problems when the Kolmogorov distance is considered.

\section{An algorithm for optimal approximation}\label{sec:alg}
In the scope of this section, let $X$ be a given random variable with a finite support of size $n$, and let  $0<m\leq n$ be a given complexity bound. The section evolves by developing notations and by collecting facts towards an algorithm for finding an optimal $m$-approximation of $X$.

The first useful fact is that it is enough to limit our search to approximations $X'$s such that $\support(X') \subseteq \support(X)$:

\begin{lemma}\label{lem:supContained}
	For every random variable $X''$ there is a random variable $X'$ such that $\support(X') \subseteq \support(X)$ and $d_{K}(X,X')\leq d_{K}(X,X'')$.
\end{lemma}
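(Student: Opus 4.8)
The plan is to construct the required $X'$ by \emph{snapping} all of the probability mass of $X''$ onto the points of $\support(X)$ and then to argue that this operation cannot increase the Kolmogorov distance. Write $\support(X) = \{x_1 < \cdots < x_n\}$. The structural fact I would lean on is that $F_X$ is a step function that is constant on each of the intervals $(-\infty,x_1), [x_1,x_2), \ldots, [x_{n-1},x_n), [x_n,\infty)$. Consequently, for \emph{any} candidate $X'$ with $\support(X') \subseteq \support(X)$ the cdf $F_{X'}$ is likewise constant on each of these intervals, so $F_X - F_{X'}$ is piecewise constant and the supremum defining the distance is actually a maximum attained at the support points:
\[
d_K(X,X') = \max_{1 \le i \le n} \bigl|F_X(x_i) - F_{X'}(x_i)\bigr|.
\]
On the two unbounded end intervals both cdfs agree trivially (value $0$ below $x_1$, value $1$ at and beyond $x_n$), so they contribute nothing. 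This reduces the whole problem to controlling finitely many cdf values.

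Building on this reduction, I would take $X'$ to be the unique random variable supported on $\support(X)$ whose cdf agrees with that of $X''$ at the support points, namely $F_{X'}(x_i) = F_{X''}(x_i)$ for $i < n$ and $F_{X'}(x_n) = 1$. Concretely this assigns mass $F_{X''}(x_i) - F_{X''}(x_{i-1}) = \Pr(x_{i-1} < X'' \le x_i)$ to each $x_i$ with $i < n$ (reading $x_0 = -\infty$) and mass $\Pr(X'' > x_{n-1})$ to $x_n$; equivalently, every outcome of $X''$ is rounded \emph{up} to the nearest element of $\support(X)$, with all tail mass beyond $x_n$ sent to $x_n$. Before using it I would verify that $X'$ is a genuine random variable: the masses are non-negative precisely because $F_{X''}$ is non-decreasing, and they sum to one by telescoping.

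With the construction in hand the bound is immediate: for $i < n$ we have $|F_X(x_i) - F_{X'}(x_i)| = |F_X(x_i) - F_{X''}(x_i)| \le \sup_t |F_X(t) - F_{X''}(t)| = d_K(X,X'')$, while the $i = n$ term vanishes since both cdfs equal $1$ at $x_n$; feeding this into the displayed reduction gives $d_K(X,X') \le d_K(X,X'')$. I expect the only delicate point to be the first step, justifying that the supremum over all $t$ collapses to a maximum over the finitely many support points, which hinges on fixing a continuity convention for the cdf (say right-continuity) and on the trivial agreement of the two cdfs on the unbounded end intervals. The choice of rounding direction (up rather than down) is exactly what makes the cdf values match at the support points, and it is the one place where a careless choice would break the argument.
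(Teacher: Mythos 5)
Your proposal is correct and takes essentially the same route as the paper's own proof: the identical rounding-up construction (mass $\Pr(x_{i-1} < X'' \le x_i)$ at each $x_i$, with all tail mass beyond $x_{n-1}$ sent to $x_n$), the same observation that $F_{X'}$ then agrees with $F_{X''}$ at the support points while $F_{X'}(x_n)=1$, and the same reduction of the supremum to a maximum over the support points via the piecewise constancy of both cdfs. If anything, you spell out the pmf verification and the continuity convention more carefully than the paper does.
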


\begin{proof}
	Let $\{x_1 <\cdots<x_n\} = \support(X)$, and let $x_0 = -\infty, x_{n+1}=\infty$. Consider the random variable $X'$ whose probability mass function is
	$f_{X'}(x_i) = P(x_{i-1} < X'' \leq x_i)$ for $i=1,\dots,n-1$,  $f_{X'}(x_n) = P(x_{n-1} < X'')$, and $F_{X'}(x)=0$ if $x\notin \support(X)$.  Since we only "shifted" the probability mass of $X''$ to the support of $X$, we have that $f_{X'}$ is a probability mass function and therefore $X'$ is well defined. 	
	By construction, $|F_{X}(x_i)-F_{X'}(x_i)| = |F_{X}(x_i)-F_{X''}(x_i)|$ for every  $0 <  i < n$. For $i=n$ we have $|F_{X}(x_n)-F_{X'}(x_n)| = |1-1|=0$.
	Since $|F_{X}(x)-F_{X'}(x)| = |F_{X}(x_i)-F_{X'}(x_i)|$ for every  $0\leq   i  \leq  n$  and $x_i<x<x_{i+1}$, we have that $d_K(X,X')=max_{i}|F_{X}(x_i)-F_{X'}(x_i)|\leq max_{i}|F_{X}(x_i)-F_{X''}(x_i)|\leq d_K(X,X'')$.
\end{proof}


For a set $S\subseteq \support(X)$, let $\mathbb{X}_S$ denote the set of random variables whose supports are contained in $S$. In Step 1 below, we identify a random variable in $\mathbb{X}_S$  that minimizes the Kolmogorov distance from $X$. We denote the Kolmogorov distance between this variable and $X$ by $\varepsilon(X,S)$. Then, in Step 2, we show how to efficiently find a set $S \subseteq \support(X)$ whose size is smaller or equal to $m$ that minimizes $\varepsilon(X,S)$. Then, in Step 3, an optimal $m$-approximation is constructed by taking a minimal approximation in $\mathbb{X}_S$ where $S$ is the set of size $m$ that that minimizes $\varepsilon(X,S)$. 

\subsection*{Step 1: Finding an $X'$ in $\mathbb{X}_S$ that minimizes $d_K(X,X')$}

We first fix a set $S\subseteq \support(X)$ and among all the random variables in $\mathbb{X}_S$ find one with a minimal distance from $X$. Denote the elements of $S$ in increasing order by $S=\{x_1<\dots<x_m\}$ and let $x_0=-\infty$ and $x_{m+1}=\infty$. Consider the following weight function:

\begin{definition}\label{def:weight} For $0\leq i \leq m$ let
	\[
	w(x_i,x_{i+1})=
	\begin{cases}
	P(x_i < X < x_{i+1}) & \text{if $i=0$ or $i = m$;} \\
	P(x_i < X < x_{i+1})/2 & \text{otherwise.} \\	
	\end{cases}
	\]
\end{definition}
For each $1 <  i \leq m$ let $\hat x_i$ be the maximal element of $\support(X)$ that is smaller than $x_i$. 
Note that, for $1\leq i \leq m$, $P(x_i < X < x_{i+1}) = F_X(\hat x_{i+1}) - F_X(x_i)$, a fact that we will use throughout this section.

\begin{definition}\label{def:error}
Let $\varepsilon(X,S) = \max\limits_{i=0,\dots,m} w(x_{i}, x_{i+1})$.
\end{definition}

We first show that $\varepsilon(X,S)$ is a lower bound for the distance between random variable in $\mathbb{X}_S$ and $X$. Then, we present a random variable $X'\in \mathbb{X}_S$ such that $d_K(X,X')=\varepsilon(X,S)$. It then follows that $X'$ is closets to $X$ in $\mathbb{X}_S$.



\begin{proposition}\label{prop:minimal}
	If $X'\in\mathbb{X}_S$ then $d_K(X,X') \geq \varepsilon(X,S)$.
\end{proposition}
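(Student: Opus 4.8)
The plan is to fix an arbitrary $X'\in\mathbb{X}_S$ and to establish the lower bound $d_K(X,X')\geq w(x_i,x_{i+1})$ \emph{separately} for each index $i\in\{0,\dots,m\}$; since the Kolmogorov distance is a single supremum that must dominate every one of these bounds, taking the maximum over $i$ immediately gives $d_K(X,X')\geq\varepsilon(X,S)$. The structural fact driving the whole argument is that, because $\support(X')\subseteq S$ and $S$ contains no point strictly inside the open interval $(x_i,x_{i+1})$, the cdf $F_{X'}$ is \emph{constant} on $[x_i,x_{i+1})$; call this constant $c_i=F_{X'}(x_i)$. Hence on each gap the deviation $|F_X(t)-F_{X'}(t)|$ equals $|F_X(t)-c_i|$, and it suffices to evaluate $F_X$ at the endpoints of the gap, where it is most extreme.

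For the two boundary gaps the value of $c_i$ is forced. For $i=0$, evaluating the deviation at the largest support point of $X$ strictly below $x_1$ (if none exists the bound is vacuous, since then $P(X<x_1)=0$) gives $F_{X'}=0$ there while $F_X=P(X<x_1)$, so $d_K(X,X')\geq P(X<x_1)=w(x_0,x_1)$. For $i=m$, evaluating at $x_m$ gives $F_{X'}(x_m)=1$ and $F_X(x_m)=1-P(X>x_m)$, so $d_K(X,X')\geq P(X>x_m)=w(x_m,x_{m+1})$. In both boundary cases $F_{X'}$ is pinned to $0$ or to $1$, so the entire escaping mass contributes in full to the distance.

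The interior case $1\leq i\leq m-1$ is where the factor of two, and the real content of the proposition, resides. Writing $a=F_X(x_i)$ and $p=P(x_i<X<x_{i+1})$, the identity noted after Definition~\ref{def:weight} gives $F_X(\hat x_{i+1})=a+p$, while $F_{X'}(\hat x_{i+1})=F_{X'}(x_i)=c_i$ because $\hat x_{i+1}\in[x_i,x_{i+1})$. Evaluating the deviation at $x_i$ and at $\hat x_{i+1}$ therefore bounds $d_K(X,X')$ below by $\max(|a-c_i|,\,|a+p-c_i|)$, and the elementary inequality $\max(|u|,|v|)\geq\tfrac12|u-v|$ (a consequence of $|u|+|v|\geq|u-v|$), applied with $u=a-c_i$ and $v=a+p-c_i$ so that $|u-v|=p$, yields $d_K(X,X')\geq p/2=w(x_i,x_{i+1})$. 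Intuitively, unlike at the boundaries, here $F_{X'}$ is free to sit at any constant level, so an adversary would center $c_i$ to split the jump of $F_X$ evenly; yet even this optimal placement still incurs error at least $p/2$ on one side.

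The main obstacle I anticipate is exactly making the interior estimate tight and independent of the adversarial choice of $c_i$: one must argue that \emph{no} constant can do better than $p/2$, which is precisely what the $\max(|u|,|v|)\geq\tfrac12|u-v|$ step secures. The only remaining care is routine bookkeeping: identifying the correct evaluation points $\hat x_{i+1}$, confirming $F_{X'}$ is constant across each gap, and disposing of the degenerate subcases in which a gap carries no support mass (then $p=0$ and the bound is trivial).
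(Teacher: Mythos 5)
Your proof is correct and takes essentially the same route as the paper's: you bound $d_K(X,X')$ below by the deviations at $x_i$ and $\hat x_{i+1}$, exploit the constancy of $F_{X'}$ on each gap (pinned to $0$ and $1$ at the two boundary gaps), and apply $\max(|u|,|v|)\geq\tfrac12|u-v|$ for the interior gaps, exactly as in the paper. Your explicit treatment of the degenerate cases (no support point below $x_1$, or a gap with $p=0$) is a minor bit of added care that the paper leaves implicit.
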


\begin{proof}

	By definition, for each $0\leq i\leq m$, $d_K(X,X') \geq \max \{|F_X(\hat x_{i+1}) - F_{X'}(\hat x_{i+1})|,|F_X(x_i) - F_{X'}(x_i)| \}$. Note that $F_{X'}(\hat x_{i+1})=F_{X'}(x_i)$ since the probability of  elements not in $S$ is  $0$.
	
	If $i=0$, that is $x_i=-\infty$, we have that $F_X(x_i)=F_{X'}(x_i)=F_{X'}(\hat x_{i+1})=0$ and therefore $d_K(X,X') \geq |F_X(\hat x_{i+1})| = |F_X(\hat x_{i+1}) - F_{X}(x_i)| =  P(x_i < X < x_{i+1})= w(x_i,x_{i+1})$.
	
	If $i =m$, that is $x_{i+1}=\infty$, we have that $F_X(\hat x_{i+1})=F_{X'}(\hat x_{i+1})=F_{X'}(x_i)=1$. 
	and therefore $d_K(X,X') \geq |1-F_X(\hat x_i)| = |F_X(\hat x_{i+1}) - F_{X}(x_i)| =  P(x_i < X < x_{i+1}) = w(x_i,x_{i+1})$. 
	
	%
	%
	%
	
	Otherwise for every $1\leq i< m$,  we use the fact that $max\{|a|,|b|\} \geq |a-b|/2$ for every $a,b\in\mathbb{R}$, to deduce that $d_K(X,X') \geq 1/2| F_X(\hat x_{i+1}) - F_X(x_i) + F_{X'}(x_i) -F_{X'}(\hat x_{i+1})|$. So $d_K(X,X') \geq 1/2| F_X(\hat x_{i+1}) - F_X(x_i) | =P(x_1 < X < x_2)/2 = w(x_i,x_{i+1})$. 
	
Since $d_K(X,X') {\geq}  w(x_i,x_{i+1})$ for any $0{\leq} i{\leq} m$, the proof follows by the definition of $\varepsilon(X,S)$.
\end{proof}

We next describe a random variable $X'\in\mathbb{X}_S$ in a distance of $\varepsilon(X,S)$ from $X$ by its probability mass function:

\begin{definition}\label{def:construction}
	Let $f_{X'}(x_{i}) = w(x_{i-1},x_i) + w(x_i,x_{i+1}) + f_{X}(x_i)$ for $i=1,\dots,m$ and $f_{X'}(x)=0$ for $x \notin S$.
\end{definition}

We first show that $X'$ is a properly defined random variable:

\begin{lemma}
	$f_{X'}$ is a probability mass function. 
\end{lemma}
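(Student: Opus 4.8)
The goal is to show that $f_{X'}$ from Definition~\ref{def:construction} is a valid probability mass function, meaning it is nonnegative on every point of $S$ and its values sum to one.

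\begin{proof}[Proof plan]
The plan is to verify the two defining properties of a probability mass function: nonnegativity of each $f_{X'}(x_i)$, and that the total mass equals one.

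\textbf{Nonnegativity.} First I would observe that each $f_{X'}(x_i) = w(x_{i-1},x_i) + w(x_i,x_{i+1}) + f_X(x_i)$ is a sum of three nonnegative quantities. The weights $w(\cdot,\cdot)$ are nonnegative because they are probabilities (or halves of probabilities) by Definition~\ref{def:weight}, and $f_X(x_i) \geq 0$ since $f_X$ is itself a probability mass function. Hence $f_{X'}(x_i) \geq 0$ for every $i$, and $f_{X'}(x) = 0$ off $S$ by definition. This part is routine.

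\textbf{Total mass equals one.} The main work is to compute $\sum_{i=1}^m f_{X'}(x_i)$ and show it equals $1$. Expanding the definition, this sum is $\sum_{i=1}^m f_X(x_i) + \sum_{i=1}^m \bigl( w(x_{i-1},x_i) + w(x_i,x_{i+1}) \bigr)$. The key idea is a telescoping/double-counting argument on the weight terms. As $i$ ranges over $1,\dots,m$, each interval weight $w(x_j,x_{j+1})$ for an interior index $1 \le j \le m-1$ is counted exactly twice in the weight sum---once as the ``right'' weight $w(x_i,x_{i+1})$ when $i=j$ and once as the ``left'' weight $w(x_{i-1},x_i)$ when $i=j+1$. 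Since interior weights were defined with the factor $1/2$, each such interval contributes its full probability $P(x_j < X < x_{j+1})$ to the total. The two boundary terms $w(x_0,x_1)$ (appearing only once, via $i=1$) and $w(x_m,x_{m+1})$ (appearing only once, via $i=m$) were defined without the halving, so they also contribute their full probabilities $P(-\infty < X < x_1)$ and $P(x_m < X < \infty)$. Therefore the weight sum collapses to $\sum_{i=0}^{m} P(x_i < X < x_{i+1})$, the total probability mass of $X$ strictly between consecutive support points of $S$ (including the two unbounded ends). Adding back $\sum_{i=1}^m f_X(x_i) = \sum_{i=1}^m P(X = x_i)$, the atoms at the points of $S$, accounts for all of the probability mass of $X$: every real value either equals some $x_i \in S$ or lies strictly inside exactly one of the intervals $(x_i, x_{i+1})$. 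Hence the grand total is $P(X \in \mathbb{R}) = 1$.

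The only delicate point---and the step I would be most careful about---is the bookkeeping at the two boundaries: confirming that the factor-of-$1/2$ convention for interior weights versus the full weight for the $i=0$ and $i=m$ boundary terms is exactly what makes each interval counted with total coefficient one. I would state the partition of $\mathbb{R}$ explicitly (the singletons $\{x_i\}$ together with the open intervals $(x_i,x_{i+1})$ for $0 \le i \le m$ cover the line disjointly) to make the final identification $\sum f_{X'} = 1$ transparent rather than relying on the reader to reconstruct it.
\end{proof}
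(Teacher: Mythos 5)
Your proposal is correct and follows essentially the same route as the paper: nonnegativity is immediate from the definition, and the total mass is computed by regrouping the weight sum so that each interior weight $w(x_i,x_{i+1})$ appears twice (recovering the full probability $P(x_i<X<x_{i+1})$ thanks to the factor $1/2$) while the two boundary weights appear once, after which the atoms on $S$ and the interval probabilities together partition the mass of $X$ and sum to one. Your explicit statement of the partition of $\mathbb{R}$ into singletons and open intervals is a slightly more careful write-up of the paper's closing remark ``since this is the entire support of $X$,'' but the argument is the same.
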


\begin{proof}
	By definition $f_{X'}(x_{i})\geq 0$ for every $i$. To see that $\sum_i f_{X'}(x_{i}) =1$, 
	we have $\sum_i f_{X'}(x_{i}) = \sum_i (w(x_{i-1},x_i) + w(x_i,x_{i+1}) + f_{X}(x_i)) = 	
	\sum _{x_i\in S} f_{X}(x_i)) + w(x_0,x_1)+ \sum_{0 < i < m} 2 w(x_i,x_{i+1}) + w(x_m,x_{m+1}) = 
	\sum _{x_i\in S} P(X{=}x_i) + P(x_0 {<} X {<} x_1)+ \sum_{0 < i < m} P(x_i < X < x_{i+1}) +P(x_m < X < x_{m+1}) = 1$ since this is the entire support of $X$.	
\end{proof}

Note that, given $S$, $X'$ can be constructed in time linear in the size of the support of $X$.
Its main property, of course, is that the distance between the cumulative distribution functions of $X$ and of $X'$ are bounded by $w(x_i,x_{i+1})$, as follows:
\begin{lemma}\label{lem:balance}
	Let $x\in\support(X)$ and $0\leq i\leq m$ be such that $x_i\leq x \leq x_{i+1}$ then 
    \[
    -w(x_i,x_{i+1})\leq F_X(x)-F_{X'}(x)\leq  w(x_i,x_{i+1}).
    \]
    \end{lemma}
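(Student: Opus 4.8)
The plan is to reduce the claim to two facts: an exact evaluation of the gap $F_X-F_{X'}$ at the points of $S$, and a simple monotonicity argument that propagates this value across each interval between consecutive points of $S$. Throughout I use the right-continuous convention for cumulative distribution functions and assign each finite support point $x$ to the unique interval $[x_i,x_{i+1})$ with $x_i=\max\{s\in S:\ s\le x\}$; the right endpoints $x_{i+1}\in S$ are then covered as left endpoints of the next interval, and the value ``at'' $x_{i+1}$ seen from the left is the value at $\hat x_{i+1}$, exactly as in the proof of Proposition~\ref{prop:minimal}. This convention is what disambiguates the shared endpoints of the closed intervals in the statement.

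First I would fix $x$ with $x_i\le x<x_{i+1}$ and write $F_{X'}(x)=F_{X'}(x_i)$, since $X'$ places no mass strictly between two consecutive elements of $S$, and $F_X(x)=F_X(x_i)+P(x_i<X\le x)$. Subtracting gives $F_X(x)-F_{X'}(x)=\big(F_X(x_i)-F_{X'}(x_i)\big)+P(x_i<X\le x)$, where the last term lies in $[0,\,P(x_i<X<x_{i+1})]$: it is nonnegative, and $x<x_{i+1}$ forces $x\le\hat x_{i+1}$, so $P(x_i<X\le x)\le P(x_i<X\le\hat x_{i+1})=P(x_i<X<x_{i+1})$ by the identity noted after Definition~\ref{def:weight}.

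The heart of the argument is the exact value $F_{X'}(x_k)-F_X(x_k)=w(x_k,x_{k+1})$ for $1\le k\le m$, with the base value $F_{X'}-F_X=0$ at $x_0=-\infty$. I would prove this by induction on $k$. Using $f_{X'}(x_k)=w(x_{k-1},x_k)+w(x_k,x_{k+1})+f_X(x_k)$ from Definition~\ref{def:construction} and $F_X(x_k)-F_X(x_{k-1})=P(x_{k-1}<X<x_k)+f_X(x_k)$, the increment of $F_{X'}-F_X$ from $x_{k-1}$ to $x_k$ is exactly $w(x_{k-1},x_k)+w(x_k,x_{k+1})-P(x_{k-1}<X<x_k)$. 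The factor $\tfrac12$ in the interior case of Definition~\ref{def:weight} is precisely what makes $w(x_{k-1},x_k)-P(x_{k-1}<X<x_k)$ collapse to $-w(x_{k-1},x_k)$, so the increment reduces to $w(x_k,x_{k+1})-w(x_{k-1},x_k)$ and the recursion telescopes to $w(x_k,x_{k+1})$; at the left boundary $k-1=0$ the term $w(x_0,x_1)-P(x_0<X<x_1)$ vanishes outright, launching the induction from the base value $0$.

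Combining the two facts finishes the proof. For $1\le i<m$ the interior value $F_X(x_i)-F_{X'}(x_i)=-w(x_i,x_{i+1})=-\tfrac12 P(x_i<X<x_{i+1})$ together with $P(x_i<X\le x)\in[0,P(x_i<X<x_{i+1})]$ pins $F_X(x)-F_{X'}(x)$ inside $[-w(x_i,x_{i+1}),\,w(x_i,x_{i+1})]$. For $i=0$ the interior value is $0$ and $w(x_0,x_1)=P(x_0<X<x_1)$, so the difference lies in $[0,w(x_0,x_1)]$; for $i=m$ the interior value is $-w(x_m,x_{m+1})$ with $w(x_m,x_{m+1})=P(x_m<X<x_{m+1})$, so the difference lies in $[-w(x_m,x_{m+1}),0]$. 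In every case the claimed two-sided bound holds. I expect the main obstacle to be the bookkeeping around the two boundary intervals and the interior factor $\tfrac12$, since these are exactly where the definition of $w$ changes and where an off-by-one in the telescoping would destroy the cancellation; the monotonicity step itself is routine.
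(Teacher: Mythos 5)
Your proof is correct and takes essentially the same route as the paper's: an induction over the points of $S$ that establishes the exact ``sawtooth'' values $F_X(x_k)-F_{X'}(x_k)=-w(x_k,x_{k+1})$, with the factor $\tfrac12$ in Definition~\ref{def:weight} producing exactly the telescoping cancellation, followed by monotonicity of $F_X$ across each interval where $F_{X'}$ is flat. Your packaging --- a single invariant at the points of $S$ plus the one-step bound $P(x_i<X\le x)\in[0,\,P(x_i<X<x_{i+1})]$ --- is a slightly tidier bookkeeping of the same argument and treats the boundary intervals $i=0$ and $i=m$ more uniformly than the paper's separate case computations (which contain sign typos in the $i=m$ case), but it is not a different approach.
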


\begin{proof}	
	By induction on $i$.	First see that $F_{X'}(x) = 0$  for every $x_0<x < x_1$ and therefore $F_X(x)-F_{X'}(x) = F_X(x)-0 \leq F_X(\hat x_1)= F_X(\hat x_1)- F_X(x_0) = w(x_0,x_1)$. For $ x = x_1$ we have $F_X(x_1)-F_{X'}(x_1)=  F_X(\hat x_1) +f_{X}(x_1) - (w(x_0,x_1) + w(x_1,x_2) + f_{X}(x_1) = 
	w(x_0,x_1) +f_{X}(x_1) - (w(x_0,x_1) + w(x_1,x_2) + f_{X}(x_1)) = -w(x_1,x_2)$.
	
	Next assume that $F_X(\hat x_i)-F_{X'}(\hat x_i) = w(x_{i-1},x_i)$.
	Then $F_X(x_i)-F_{X'}(x_i) =  F_X(\hat x_i) +f_{X}(x_i) - (w(x_{i-1},x_i) + w(x_i,x_{i+1}) + f_{X}(x_i)) =  w(x_{i-1},x_i) +f_{X}(x_{i}) - (w(x_{i-1},x_i) + w(x_{i},x_{i+1}) + f_{X}(x_{i})) = -w(x_{i},x_{i+1})$.
	
	As before we have that for all $x_i< x< x_{i+1}$, we have $F_X(x)-F_{X'}(x) = F_X(x)-F_{X'}(\hat x_{i+1}) \leq  F_X(\hat x_{i+1})-F_{X'}(\hat x_{i+1})$. Then $F_X(\hat x_{i+1})-F_{X'}(\hat x_{i+1}) = (F_X(x_i)+ P(x_i < x <x_{i+1})) - F_{X'}(x_i) = -w(x_{i},x_{i+1}) + 2w(x_{i},x_{i+1}) = w(x_{i},x_{i+1}) $.

	Finally, for $x_m\leq x \leq x_{m+1}$ we have that $F_{X'}(x_m)=1$ therefore $F_X(x_m)-F_{X'}(x_m) = (1- P(x_m<X < x_{m+1})) - 1 =  P(x_m<X < x_{m+1}) = w(x_m,x_{m+1})$, and for every $x_m<x<x_{m+1}$ we have $F_X(x) - F_{X'}(x) < (1- P(x_m<X < x_{m+1})) - 1 <  - P(x_m<X < x_{m+1}))  = -w(x_m,x_{m+1})$ as required.
\end{proof}

From Lemma \ref{lem:balance}, by the definition of $\varepsilon(X,S)$, we then have:
\begin{corollary} \label{col:Xprime}
	$d_K(X,X') = \varepsilon(X,S)$.
\end{corollary}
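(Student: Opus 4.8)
The goal is to establish that $d_K(X,X') = \varepsilon(X,S)$ for the random variable $X'$ constructed in Definition~\ref{def:construction}. My plan is to combine the two halves that the excerpt has already assembled: the lower bound from Proposition~\ref{prop:minimal}, which asserts that every variable in $\mathbb{X}_S$ satisfies $d_K(X,X') \geq \varepsilon(X,S)$, and a matching upper bound extracted from Lemma~\ref{lem:balance}. Since the constructed $X'$ manifestly lies in $\mathbb{X}_S$ (its mass is supported on $S$ by Definition~\ref{def:construction}), the lower bound applies to it directly, so it remains only to show $d_K(X,X') \leq \varepsilon(X,S)$.

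For the upper bound, I would argue as follows. Recall that $d_K(X,X') = \sup_t |F_X(t) - F_{X'}(t)|$, and since both cdfs are step functions that change only at points of $\support(X)$, the supremum is attained at some $x \in \support(X)$. Fix any such $x$, and let $i$ be the index with $x_i \leq x \leq x_{i+1}$, where $x_0,\dots,x_{m+1}$ are as in Step~1. Lemma~\ref{lem:balance} gives exactly the two-sided bound
\[
-w(x_i,x_{i+1}) \leq F_X(x) - F_{X'}(x) \leq w(x_i,x_{i+1}),
\]
so that $|F_X(x) - F_{X'}(x)| \leq w(x_i,x_{i+1}) \leq \max_{j=0,\dots,m} w(x_j,x_{j+1}) = \varepsilon(X,S)$, the last equality being the definition of $\varepsilon(X,S)$ (Definition~\ref{def:error}). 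Taking the supremum over all $x$ yields $d_K(X,X') \leq \varepsilon(X,S)$.

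Putting the two inequalities together gives $d_K(X,X') = \varepsilon(X,S)$, which is the claim. The proof is essentially a bookkeeping step, so I expect no serious obstacle; the only point requiring a little care is the reduction of the supremum over all real $t$ to a maximum over the finitely many atoms of $\support(X)$, together with the observation that $F_X - F_{X'}$ is constant on each interval $(x_i, x_{i+1})$ so that the interval cases in Lemma~\ref{lem:balance} are indeed covered by the value $w(x_i,x_{i+1})$. Since the excerpt's Lemma~\ref{lem:balance} already handles both the endpoints and the interior of each interval, this reduction is immediate, and the corollary follows.
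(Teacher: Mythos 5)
Your proposal is correct and matches the paper's argument: the paper likewise obtains the corollary by combining the upper bound from Lemma~\ref{lem:balance} (taken over all intervals, via Definition~\ref{def:error}) with the lower bound from Proposition~\ref{prop:minimal}, which applies since $X' \in \mathbb{X}_S$. Your write-up merely makes explicit the reduction of the supremum to the atoms of $\support(X)$, a bookkeeping point the paper leaves implicit.
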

From Proposition~\ref{prop:minimal} we also have:
\begin{corollary} \label{col:Xprimeopt}
	$\varepsilon(X,S)$ is the Kolmogorov distance between $X$ and the variables closest to it in $\mathbb{X}_S$.
\end{corollary}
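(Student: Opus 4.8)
The plan is to derive Corollary~\ref{col:Xprimeopt} by combining the two facts we have already established about the quantity $\varepsilon(X,S)$. Proposition~\ref{prop:minimal} gives a lower bound: every random variable $X'\in\mathbb{X}_S$ satisfies $d_K(X,X')\geq \varepsilon(X,S)$, so no variable in $\mathbb{X}_S$ can be strictly closer to $X$ than $\varepsilon(X,S)$. Corollary~\ref{col:Xprime} gives a matching upper bound by exhibiting the specific $X'$ of Definition~\ref{def:construction} and showing $d_K(X,X')=\varepsilon(X,S)$. Together these two bounds pin down the infimum of $d_K(X,\cdot)$ over $\mathbb{X}_S$ and show it is attained.

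Concretely, I would first invoke Proposition~\ref{prop:minimal} to conclude $\inf_{X'\in\mathbb{X}_S} d_K(X,X')\geq \varepsilon(X,S)$. Then I would point to the witness $X'$ constructed in Definition~\ref{def:construction}, note that it lies in $\mathbb{X}_S$ by construction (its mass is supported on $S$), and apply Corollary~\ref{col:Xprime} to get $d_K(X,X')=\varepsilon(X,S)$. Since this witness achieves the lower bound, the infimum is attained and equals $\varepsilon(X,S)$. Hence $\varepsilon(X,S)$ is exactly the Kolmogorov distance between $X$ and any variable in $\mathbb{X}_S$ that is closest to it.

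Since both ingredients are already proved earlier in the excerpt, there is essentially no obstacle here; the only point requiring a word of care is the phrasing of the statement, which refers to ``the variables closest to $X$'' in the plural. I would address this by observing that the lower bound applies uniformly to every element of $\mathbb{X}_S$, so any minimizer (there may be several) attains exactly $\varepsilon(X,S)$, and at least one minimizer exists by the explicit construction. Thus the claim holds regardless of uniqueness of the closest variable.
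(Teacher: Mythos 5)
Your proof is correct and matches the paper's own (implicit) argument exactly: the paper derives this corollary by combining Proposition~\ref{prop:minimal} (the lower bound over all of $\mathbb{X}_S$) with Corollary~\ref{col:Xprime} (the constructed witness attaining $\varepsilon(X,S)$). Your added remark that the minimizer need not be unique but every minimizer attains $\varepsilon(X,S)$ is a fine point of care, fully consistent with the paper's reasoning.
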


\subsection*{Step 2: Finding a set $S$ that minimizes $\varepsilon(X,S)$}

We proceed to finding a set $S$ of size $m$ that minimizes $\varepsilon(X,S)$. To obtain that we use a graph search approach inspired by~\cite{chakravarty1982partitioning}. We construct a directed graph with a source and a target in which each source-to-target path of length smaller or equal to $m$ corresponds to a possible support set of the same size, and the weights along that path correspond to the weight as defined in Definition~\ref{def:weight}. Thus the problem of finding an $S$ of size $m$ that minimizes $\varepsilon(X,S)$ is reduced to the problem of finding a source-to-target path $\vec{p}$ of length smaller or equal to $m$ in that graph such that the maximal weight of an edge in $\vec{p}$ is minimal among all other such maximal edges in all other such paths.

More specifically, the vertices of the graph are $V=\support(X) \cup \{-\infty,\infty\}$ and the edges, $E$, are all the pairs $(x_1,x_2) \in V^2$ such that $x_1 < x_2$. The weight 
of each edge is as specified in Definition~\ref{def:weight}. Note that there is a one-to-one correspondence between a set $S \subseteq \support(X)$ of size $m$, and an $-\infty$-to-$\infty$ path $\vec{p}_S$ in $G$, obtained by removing the $-\infty$ and $\infty$ from the path in one direction and by adding these elements and sorting in the other direction. 
With this correspondence, the maximal weight of an edge on $\vec{p}_S$ is $\varepsilon(X,S)$. We denote this maximal weight of an edge  by $w(\vec{p}_S)$, and denote the set of all acyclic $-\infty$-to-$\infty$ paths in $G$ with at most $m$ edges by $paths_m(G, -\infty, \infty)$. Thus, the problem of finding the set $S$ with the minimal  $\varepsilon(X,S)$ is now reduced to the problem of finding a path $\vec{p}\in paths_m(G, -\infty, \infty)$ such that $w(\vec{p})$ is minimal among all $\{w(\vec{p'}) \colon \vec{p'}\in paths_m(G, -\infty, \infty)\}$. 
This problem can be solved by a variant of the Bellman-Ford algorithm and by the algorithm described in~\cite{guerin2002computing} that performs better in some cases.


\subsection*{Step 3: Constructing the overall algorithm}

We combine Step 1 and Step 2 in the following algorithm called $\KlmApprox$ (Algorithm~\ref{alg:optapprox}) that follows naturally from the two steps. Given $X$ and $\support(X)$ we add $x_0=\infty,x_{n+1}=-\infty$ and construct the graph (line 2) as described  in Step 2 above. Then we execute a variant of the Bellman-Ford algorithm on $G$ for $m$ iterations, or the algorithm proposed in~\cite{guerin2002computing}, to obtain a path $\vec{p}$ (line 2). Finally, we use Definition~\ref{def:construction} to construct $X'$ from $\vec{p}$ (line 3).

\begin{algorithm}\label{alg:optapprox}
	\DontPrintSemicolon
	Construct a weighted graph $G=(V,E)$ where $V=\support(X) \cup \{-\infty,\infty\}$, $E=\{(x_1,x_2) \in V^2\colon x_1 < x_2 \}$, and the weights are as in Definition~\ref{def:weight}.\; \vspace{0.2cm}

	Find a path $\vec{p}=(x_0,\dots,x_{m+1})\in paths_m(G, -\infty, \infty)\}$ such that $w(\vec{p}) = \min \{w(\vec{p}) \colon \vec{p}\in paths_m(G, -\infty, \infty)\}$.   \; 	\vspace{0.2cm}

	Return a random variable whose probability mass function is 
		$f_{X'}(x_{i}) = w(x_{i-1},x_i) + w(x_i,x_{i+1}) + f_{X}(x_i)$ for all $i=1,\dots,m$ and zero otherwise. \;
	
	\caption{$\KlmApprox (X, m)$}  
	\label{alg:sequence}
\end{algorithm}

%

\begin{theorem}\label{the:algo}
	$\KlmApprox$ returns an $m$-optimal-approximation of $X$.
\end{theorem}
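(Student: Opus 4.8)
The plan is to assemble the theorem from the three steps, treating the algorithm's output as the variable $X^*$ built in Definition~\ref{def:construction} from the set $S^*$ that corresponds to the path returned in line~2. Against the definition of an optimal $m$-approximation, two things must be checked: that $|\support(X^*)|\le m$, and that no variable with support of size at most $m$ is strictly closer to $X$. The first is immediate, since by line~3 the mass function $f_{X^*}$ is supported on the at most $m$ internal vertices $x_1,\dots,x_m$ of $\vec p$.

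For the distance achieved by $X^*$, I would first invoke the correspondence set up in Step~2: every $-\infty$-to-$\infty$ path $\vec p_S$ satisfies $w(\vec p_S)=\varepsilon(X,S)$, so minimizing the bottleneck weight $w(\vec p)$ over $paths_m(G,-\infty,\infty)$ is exactly minimizing $\varepsilon(X,S)$ over the admissible sets. Granting that the bottleneck variant of Bellman--Ford run for the prescribed number of iterations returns such a minimizer $S^*$, Corollary~\ref{col:Xprime} gives $d_K(X,X^*)=\varepsilon(X,S^*)$, the smallest achievable value of $\varepsilon$.

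The heart of the argument is the lower bound against an arbitrary competitor. Let $X''$ be any random variable with $|\support(X'')|\le m$. By Lemma~\ref{lem:supContained} there is a variable $\tilde X$ with $\support(\tilde X)\subseteq\support(X)$ and $d_K(X,\tilde X)\le d_K(X,X'')$; setting $S''=\support(\tilde X)$ we have $|S''|\le m$ and $\tilde X\in\mathbb{X}_{S''}$. Proposition~\ref{prop:minimal} then yields $d_K(X,\tilde X)\ge\varepsilon(X,S'')$, and minimality of $S^*$ gives $\varepsilon(X,S'')\ge\varepsilon(X,S^*)$. Chaining,
\[
d_K(X,X'')\;\ge\;d_K(X,\tilde X)\;\ge\;\varepsilon(X,S'')\;\ge\;\varepsilon(X,S^*)\;=\;d_K(X,X^*),
\]
so $X^*$ is an optimal $m$-approximation.

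The step I expect to require the most care is the precise match between paths and admissible support sets, together with the correctness of the bottleneck search. One must check that searching over sets of size exactly $m$ loses nothing relative to size at most $m$: inserting a point into $S$ splits one interval into two and, by Definition~\ref{def:weight}, can only decrease each relevant weight, so $\varepsilon$ is monotone nonincreasing under refinement and any optimum of size below $m$ can be padded up to size $m$ (possible since $m\le n$) without harm. One must also confirm that replacing the additive relaxation of Bellman--Ford by $\max$ correctly computes the minimum-bottleneck value over all paths within the allowed length, and reconcile the edge-versus-internal-vertex counting in the definition of $paths_m$. Once this bookkeeping is settled, the theorem follows purely by the chain of inequalities above.
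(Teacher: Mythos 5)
Your proof is correct and takes essentially the same route as the paper's: it combines the minimality of the set $S$ found in Step~2 with Corollary~\ref{col:Xprime} for the distance achieved by the constructed variable, and uses Proposition~\ref{prop:minimal} together with Lemma~\ref{lem:supContained} to rule out any closer competitor of support size at most $m$. Your write-up is in fact more explicit than the paper's terse proof---notably the chain $d_K(X,X'')\ge d_K(X,\tilde X)\ge\varepsilon(X,S'')\ge\varepsilon(X,S^*)=d_K(X,X^*)$, the observation that the construction in Lemma~\ref{lem:supContained} does not increase the support size (so $|S''|\le m$), and the monotonicity-under-refinement argument for ``exactly $m$'' versus ``at most $m$,'' all of which the paper leaves implicit.
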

\begin{proof}
By the construction of $G$ we get that the path $\vec{p}$ obtained in line 2 of  $\KlmApprox$ describes a set $S$ of support of size at most $m$ for which $\varepsilon(S,X)$ is minimal. Then from Definition \ref{def:construction} and Corollary \ref{col:Xprime} we construct $X'$ in lines 3 of 	$\KlmApprox$ such that $d_K(X,X') = \varepsilon(X,S)$. Therefore $X'$ is closets to $X$ among all random variables with support contained in $\support(X)$. From Lemma \ref{lem:supContained} we then get that $X'$ is an $m$-optimal-approximation of $X$.
\end{proof}

The memory and time  complexities of $\KlmApprox$ are as follows.

\begin{theorem}\label{the:complexity}
	The $\KlmApprox(X,m)$ algorithm runs in time $O(n^2m)$, using $O(n^2)$ memory where $n=|\support(X)|$.
\end{theorem}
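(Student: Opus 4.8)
The plan is to bound separately the three operations that $\KlmApprox$ performs: constructing the graph $G$, finding the minimax path, and building $X'$ from the returned path. First I would account for the graph construction in line~1. The vertex set is $V = \support(X) \cup \{-\infty, \infty\}$, so $|V| = n+2$, and the edge set contains every pair $(x_1, x_2) \in V^2$ with $x_1 < x_2$, giving $|E| = \binom{n+2}{2} = O(n^2)$ edges. Each edge weight is determined by Definition~\ref{def:weight}, i.e.\ by a value of the form $F_X(\hat x_{i+1}) - F_X(x_i)$; after one linear pass that precomputes the cumulative distribution function $F_X$ on $\support(X)$, each such weight is evaluated in constant time. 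Hence building $G$ with all its weights takes $O(n^2)$ time and the stored graph occupies $O(n^2)$ memory, which already accounts for the claimed memory bound.

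The main work is in line~2, the minimax (bottleneck shortest) path computation. Here I would invoke the Bellman--Ford variant mentioned in Step~2: the standard relaxation that replaces edge-sum by edge-max computes, for each vertex $v$ and each number of hops $k$, the minimum over all $(-\infty)$-to-$v$ paths using at most $k$ edges of the maximal edge weight along the path. Since we restrict to paths in $paths_m(G,-\infty,\infty)$, we run only $m$ relaxation rounds rather than the full $|V|-1$; each round relaxes all $O(n^2)$ edges at $O(1)$ cost per edge, so the $m$ rounds cost $O(n^2 m)$ time. The dynamic-programming table storing the best bottleneck value per vertex (and the predecessor pointers needed to recover $\vec p$) has size $O(n)$ per round, reusable across rounds, so this stage adds only $O(n)$ working memory on top of the $O(n^2)$ graph. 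This $O(n^2 m)$ term dominates and yields the stated time bound.

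Finally I would handle line~3. By the remark following Definition~\ref{def:construction}, once $\vec p$ (equivalently $S$) is fixed, the random variable $X'$ is constructed in time linear in $|\support(X)| = n$, since each of the $m \le n$ mass values $f_{X'}(x_i)$ is a sum of two precomputed edge weights and one value $f_X(x_i)$; this is absorbed into the earlier bounds. Adding the three contributions gives total time $O(n^2) + O(n^2 m) + O(n) = O(n^2 m)$ and total memory $O(n^2)$, as claimed.

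The step I expect to require the most care is the running-time analysis of line~2, specifically justifying that restricting Bellman--Ford to $m$ iterations genuinely suffices to find the optimal path in $paths_m(G,-\infty,\infty)$ and that the minimax relaxation is correct. The subtlety is that we are not computing ordinary shortest paths but bottleneck paths subject to a hop constraint; I would argue that the $k$-th round correctly records the optimal bottleneck value achievable with at most $k$ edges (a straightforward induction on $k$, using that the relaxation takes the max of the incoming edge weight with the predecessor's stored value and then minimizes over predecessors), so that after $m$ rounds the value at $\infty$ is exactly $\min\{w(\vec p') : \vec p' \in paths_m(G,-\infty,\infty)\}$. Everything else is bookkeeping over the $O(n^2)$ edges, and the memory bound follows immediately once the graph is seen to be the dominant stored object.
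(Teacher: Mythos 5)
Your proposal is correct and takes essentially the same route as the paper: the paper's proof likewise charges $O(n^2)$ time and memory to constructing $G$ and $O(n^2m)$ time to the bottleneck-path computation, merely citing the algorithm of Gu\'erin and Orda for line~2, whereas you instantiate the hop-bounded Bellman--Ford variant that the paper itself names as an alternative in Step~2. Your extra details --- precomputing $F_X$ so each edge weight is evaluated in $O(1)$, and the induction showing that $m$ minimax relaxation rounds compute the optimal bottleneck value over hop-constrained paths --- simply make explicit what the paper delegates to the citation, and they stay within the claimed bounds (even storing per-round predecessor information to reconstruct the path costs only $O(nm)\subseteq O(n^2)$ memory).
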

\begin{proof}
	Constructing the graph $G$ as described in Step 2 takes $O(n^2)$ time and memory. Computing the shortest path can be achieved, for example, by the algorithm described in~\cite{guerin2002computing} in time $O(n^2 m)$ and no additional memory allocation.
\end{proof}

\section{Experimental evaluation}\label{sec:exp}

We describe below several experiments that show how $\KlmApprox$ performs in practice in different applications and domains.
All algorithms were implemented in Python and the experiments were executed on a hardware comprised of an Intel i5-6500 CPU @ 3.20GHz processor and 8GB memory. The algorithms of Cohen at. al. were taken "as is" from in the supplementary material to~\cite{cohen2015estimating}.

\paragraph{Repetitive support size minimization.} One use of support size minimization is when commutations that involve summations of random variables slow due to an exponential growth in the support of convolutions of random variables~\cite{cohen2015estimating}. A key action in coping with this situation is reduction of the  support size by replacing the summed random variable by an approximation of it that has a smaller support size. Previous work such as the work of Cohen at. al. in ~\cite{cohen2015estimating,CohenGW18} handle this reduction using weaker or sub-optimal notion of approximation than ours, as discussed in Section~\ref{sec:rel-work}. 

As proven in Section~\ref{sec:alg}, given $m$, a single step of $\KlmApprox$ guarantees an optimal $m$-approximation. However in the setting considered here we need to repetitively use $\KlmApprox$, thus the optimality of the eventually obtained random variable is not guaranteed. In light of this, we tested the accuracy of the repetitive-$\KlmApprox$ to see how it performs against the tools of~\cite{cohen2015estimating,CohenGW18} using their benchmarks. These benchmarks are taken from the area of task trees with deadlines, a sub area of the well-established Hierarchical planning~\cite{thomas1988hierarchical, alford2016hierarchical, xiao2017hierarchical}.

We estimated the probability for meeting deadlines in plans, as described in~~\cite{cohen2015estimating,CohenGW18}, and experimented with four different methods of approximation. The first two, $\OptTrim$~\cite{CohenGW18} and the $\Trim$~\cite{cohen2015estimating}, are taken from the repository provided by the authors and are designed for achieving only a one-sided Kolmogorov approximation - a  weaker notion of approximation then the Kolmogorov approximation analyzed in this work. The third method is a simple sampling scheme also described in~\cite{cohen2015estimating} and the fourth is our Kolmogorov approximation obtained by the proposed $\KlmApprox$ algorithm. The parameters for the different methods were chosen in a compatible way, as explained in~\cite{CohenGW18}. We ran also an exact computation as a reference to the approximated one in order to calculate the errors. 

\begin{table}[th]
	\scriptsize
	\centering
	\renewcommand{\arraystretch}{1.3}
	\begin{tabular}{|c|c|c|c|c|c|c|}
		\hline
		\multirow{2}{*}{Task Tree} & \multirow{2}{*}{$M$} & {$\KlmApprox$} & {$\OptTrim$} & {$\Trim$} & \multicolumn{2}{c|}{Sampling} \\ \cline{3-7} 
		&	& $m/N{=}10$ & $m/N{=}10$ & $\varepsilon\cdot N{=}0.1$ & $s{=}10^{4}$& $s{=}10^{6}$ \\ \hline
		\hline
		
		
		\multirow{2}{*}{Logistics} & 2& 0 & 0 &  0.0019 &  0.007 & 0.0009  \\ \Xcline{2-7}{1pt}
		{\tiny $(N=34)$}& 4& 0.0024 & 0.0046&  0.0068  &   0.0057 & 0.0005 \\\Xhline{1pt}
		
		\multirow{2}{*}{DRC-Drive}  
		&2	& 0.0014 & 0.004&  0.009  & 0.0072 & 0.0009  
		\\ \Xcline{2-7}{1pt}
		
		{\tiny $(N{=}47)$}& {4}& 0.001 & 0.008&  0.019   & 0.0075  & 0.0011 
		\\  \Xhline{1pt}

		\multirow{2}{*}{Sequential}  & {2} & 0.0093 & 0.015 &  0.024 & 0.0063 & 0.0008 \\ \Xcline{2-7}{1pt}  
		{\tiny $(N{=}10)$} & {4} & 0.008 & 0.024 &  0.04 & 0.008 & 0.0016 \\ \Xhline{1pt}

	\end{tabular}
	\caption{Comparison of estimated errors with respect to the reference exact computation on various task trees.}
	\label{tab:errors}
\end{table}

Table~\ref{tab:errors} shows the results of the experiment. The quality of the solutions obtained with the $\KlmApprox$ operator are better than those obtained by the $\Trim$ and $\OptTrim$ operators as expected. In some of the task trees, the sampling method produced better results than the approximation algorithm with $\KlmApprox$. Still, the $\KlmApprox$ approximation algorithm comes with an inherent advantage of providing an exact quality guarantees, as opposed to sampling where the best one can hope for is probabilistic guarantees.

\paragraph{Single step support minimization.}
In order to better understand the quality gaps in practice between $\KlmApprox$, $\OptTrim$, and $\Trim$, we tested their relative errors when applied on single random variables with support size $n = 100$, and different $m$s. Note that the error obtained by $\KlmApprox$ is optimal while the other methods are not optimized for the Kolmogorv distance. In each instance of this experiment, a random variable is randomly generated by choosing the probabilities of each element in the support uniformly and then normalizing these probabilities so that they sum to one.

Figure~\ref{fig:error} presents the error produced by the above methods. The depicted results are averages over fifty instances of random variables. The curves in the figure show the average error of $\OptTrim$ and $\Trim$ operators with comparison to the average error of the optimal approximation provided by $\KlmApprox$ as a function of $m$. It is evident from this graphs that increasing the support size of the approximation $m$ reduces the error, as expected, in all three methods. However, the (optimal) errors produced by the $\KlmApprox$ are significantly smaller, a half of the error produced by $\OptTrim$ and $\Trim$.

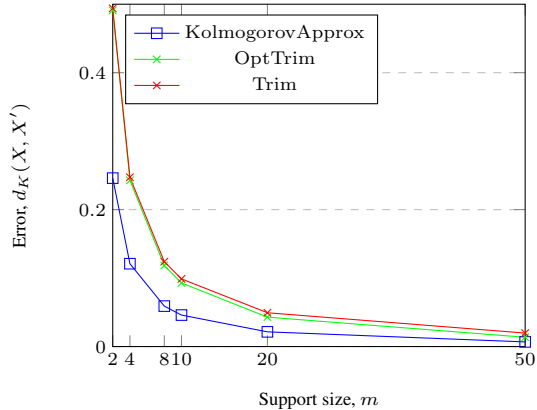
\begin{figure}[htb]
	\scriptsize	
	\centering 
	\begin{tikzpicture}
	\begin{axis}[
	scale=.8,
	xlabel={Support size, $m$},
	ylabel={Error, $d_K(X,X')$},
	xmin=2, xmax=50,
	ymin=0, ymax=0.5,
	xtick={2,4,8,10,20,50},
	legend pos=north west,
	ymajorgrids=true,
	grid style=dashed,
	]
	
	\addplot[
	color=blue,
	mark=square,
	]
	coordinates { 
		(2 , 0.246) 
		(4 , 0.121) 
		(8 , 0.0591) 
		(10 , 0.046) 
		(20, 0.0215) 
		(50, 0.0068) 
		
	};
	\addlegendentry{$\KlmApprox$}
	
	\addplot[
	color=green,
	mark=x,
	]
	coordinates {
		(2 , 0.491) 
		(4 , 0.2428) 
		(8 , 0.1184) 
		(10 , 0.0929) 
		(20, 0.0430) 
		(50, 0.0136) 
	};
	\addlegendentry{$\OptTrim$}
	
	\addplot[
	color=red,
	mark=x,
	]
	coordinates {
		(2 , 0.494) 
		(4 , 0.2473) 
		(8 , 0.124) 
		(10 , 0.0988) 
		(20, 0.0494) 
		(50, 0.01971)  
	};
	\addlegendentry{$\Trim$}
	
	\end{axis}
	\end{tikzpicture}
	\caption{Error comparison between $\KlmApprox$, $\OptTrim$, and $\Trim$ on randomly generated random variables as function of $m$.}
	\label{fig:error}
\end{figure}

\paragraph{Comparison to Linear Programming.}
We also compared the run-time of $\KlmApprox$ with a linear programing (LP) algorithm that also guarantees optimality, as described and discussed for example in~\cite{pavlikov2016cvar}.
For that, we used the ``Minimize'' function of Wolfram Mathematica as a  state-of-the-art implementation of linear programing, encoding the problem by the LP problem $\min_{\alpha \in \mathbb{R}^n} \| x - \alpha\|_\infty$ subject to $\|\alpha\|_0 \leq m$ and $\| \alpha \|_1 =1$.
The run-time comparison results were clear and persuasive: $\KlmApprox$ significantly outperforms the LP algorithm. For a random variable with support size $n=10$ and $m=5$, the LP algorithm run-time was $850$ seconds, where the $\KlmApprox$ algorithm run-time was less than a tenth of a second. For $n=100$ and $m=5$, the $\KlmApprox$ algorithm run-time was 0.14 seconds and the LP algorithm took more than a day. 
Since it is not trivial to formally analyze the run-time of the LP algorithm, we conclude by the reported experiment that in this case the LP algorithm might not be as efficient as $\KlmApprox$ algorithm whose complexity is proven to be polynomial in Theorem~\ref{the:complexity}.

\section{Discussion and future work}\label{sec:discussion}

We developed an algorithm for computing optimal approximations of random variables where the approximation quality is measured by the Kolmogorov distance.
As demonstrated in the experiments, our algorithm improves on the approach of Cohen at. al.~\cite{cohen2015estimating} and~\cite{CohenGW18} in that it finds an optimal two sided Kolmogorov approximation, and not just one sided. Beyond the Kolmogorov measure studied here we believe that similar approaches may apply also to total variation, to the Wasserstein distance, and to other measures of approximations. Another direction for future work is extensions to tables that represent other objects, not necessarily random variables. To this end, we need to extend the algorithm to support tables that do not always sum to one and tables that may contain negative entries.

\bibliography{library,Trim_Optimum}{}
\bibliographystyle{abbrv}

\end{document}